\newtheorem{definition}{Definition}
\newtheorem{lemma}{Lemma}
\newtheorem{example}{Example}
\title{Towards an Analysis of Proofs in Arithmetic}
\author{Alexander Leitsch
\institute{Institute of Logic and Computation,\\TU Wien, 
   Vienna, Austria\\}
\email{leitsch@logic.at}
\and
Anela Loli\'c \thanks{Partially supported by FWF project I-5848-N and recipient of an APART-MINT Fellowship of the Austrian Academy of Sciences at the Institute of Logic and Computation of the TU Wien.}
\institute{ Kurt G\"odel Society} 
\institute{Institute of Logic and Computation,\\TU Wien, 
   Vienna, Austria\\}
\email{anela@logic.at}
\and 
Stella Mahler \thanks{Partially supported by FWF project I-5848-N.}
\institute{Institute of Logic and Computation,\\TU Wien, 
   Vienna, Austria\\}
\email{stella@logic.at}
}
\begin{document}
\maketitle

\begin{abstract}
Inductive proofs can be represented as proof schemata, i.e. as parameterized sequences of proofs defined in a primitive recursive way. Applications of proof schemata can be found in the area of automated proof analysis where the schemata admit (schematic) cut-elimination and the construction of Herbrand systems.
This work focuses on the expressivity of proof schemata. We show that proof schemata can simulate primitive recursive arithmetic. The translation of proofs in arithmetic to proof schemata can be considered as a crucial step in the analysis of inductive proofs. 
\end{abstract}

\section{Introduction}

Mathematical induction is one of the most important principles in real mathematics, thus any substantial and relevant approach to analyze mathematical proofs has to take into account induction. But in systems with induction rules, essential proof theoretic concepts and transformation become problematic. In particular Gentzen's method of cut-elimination fails for general induction proofs and Herbrand's theorem cannot be realized \cite{Takeuti}. The reason is that in the cut-elimination method \`a la Gentzen cuts cannot be shifted over induction rules. We will illustrate this problem on a concrete example below.
\begin{example}\label{ex.PRA-proof}
Let $P(x)$ denote $h(x)=0$ for a primitive recursive function defined in primitive recursive arithmetic {\PRA}, and $g(x,y)$ be any primitive recursive function in {\PRA}. Moreover, let $f$ be a unary function and
$$\Ecal = \{f(x,0) = x, \quad f(x,s(y)) = s(f(x,y))\}$$
(defining addition). Now we consider the sequent
$$S\colon \forall x (P(x) \to P(s(x))) \vdash \forall n \forall x ((P(f(x,n)) \to P(g(x,n))) \to (P(x) \to P(g(x,n)))).$$
$S$ is not valid in pure first-order logic and does not have a Herbrand sequent w.r.t. to the theory $\Ecal$ and hence cannot be proven without induction. Therefore, there is no proof of $S$ in pure first-order logic. We need the following inductive lemma
$$\forall x (P(x) \to P(s(x))) \vdash \forall n \forall x (P(x) \to P(f(x,n))).$$
A proof $\varphi$ of this lemma is
\begin{small}
\begin{prooftree} 
	\AxiomC{$(\varphi_1)$}
	\noLine
	\UnaryInfC{$\vdash \forall x P(x) \to P(f(x,0)))$}
	\AxiomC{$(\varphi')$}
	\noLine
	\UnaryInfC{$\forall x (P(x) \to P(s(x))), \forall x (P(x) \to P(f(x,0))) \vdash \forall n \forall x (P(x) \to P(f(x,n)))$}
	\RightLabel{$cut$}
	\BinaryInfC{$\forall x (P(x) \to P(s(x))) \vdash \forall n \forall x (P(x) \to P(f(x,n)))$}
\end{prooftree}
\end{small}
where $\varphi'$ is
\begin{scriptsize}
\begin{prooftree} 
	\AxiomC{$(\varphi_2)$}
	\noLine
	\UnaryInfC{$\forall x (P(x) \to P(f(x,0))) \vdash \forall x (P(x) \to P(f(x,0)))$}
	\AxiomC{$(\varphi_3)$}
	\noLine
	\UnaryInfC{$\forall x (P(x) \to P(s(x))), \forall x (P(x) \to P(f(x,k))) \vdash \forall x (P(x) \to P(f(x,s(k))))$}
	\RightLabel{$ind$}
	\BinaryInfC{$\forall x (P(x) \to P(f(x,0))), \forall x (P(x) \to P(s(x))) \vdash \forall x (P(x) \to P(f(x,l)))$}
	\RightLabel{$\forall_r$}
	\UnaryInfC{$\forall x (P(x) \to P(f(x,0))), \forall x (P(x) \to P(s(x))) \vdash \forall n \forall x (P(x) \to P(f(x,n)))$}
\end{prooftree}
\end{scriptsize}
and $\varphi_1$, $\varphi_2$, and $\varphi_3$ are proofs without induction rules.
Now we define $\pi$ as
\begin{prooftree}
	\AxiomC{$(\varphi)$}
	\noLine
	\UnaryInfC{$\forall x (P(x) \to P(s(x))) \vdash \forall n \forall x (P(x) \to P(f(x,n)))$}
	\AxiomC{$(\pi_1)$}
	\noLine
	\UnaryInfC{$S_\pi$}
	\RightLabel{$cut$}
	\BinaryInfC{$\forall x (P(x) \to P(s(x))) \vdash \forall n \forall x ((P(f(x,n)) \to P(g(x,n))) \to (P(x) \to P(g(x,n))))$}
\end{prooftree}
$\pi_1$ is a proof without induction of the form
\begin{prooftree}
	\AxiomC{$(\pi'_1)$}
	\noLine
	\UnaryInfC{$P(u) \to P(f(u,i)) \vdash (P(f(u,i)) \to P(g(u,i))) \to (P(u) \to P(g(u,i)))$}
	\RightLabel{$\forall^*_l$}
	\UnaryInfC{$\forall n \forall x (P(x) \to P(f(x,n))) \vdash (P(f(u,i)) \to P(g(u,i))) \to (P(u) \to P(g(u,i)))$}
	\RightLabel{$\forall^*_r$}
	\UnaryInfC{$\forall n \forall x (P(x) \to P(f(x,n))) \vdash \forall n \forall x ((P(f(x,n)) \to P(g(x,n))) \to (P(x) \to P(g(x,n))))$}
\end{prooftree}

Using Gentzen's method of cut-elimination, we locate the place in the proof where $\forall n$ is introduced. In $\pi_1$ $\forall n \forall x (P(x) \to P(f(x,n)))$ is obtained from $\forall x (P(x) \to P(f(x,i)))$ by $\forall_l$. In $\varphi$ we may delete the $\forall_r$ inference yielding the cut-formula and replace $l$ by $i$. But in the attempt to eliminate the formula $\forall x (P(x) \to P(f(x,i)))$ in $\varphi$ we get stuck, as we cannot cross the $ind$ rule. Note that also $ind$ cannot be eliminated as $i$ is a variable. This problem is not due to the specific form of $\varphi$ nor of $ind$. In fact, there exists no proof of $S$ in $\LK$ with only atomic cuts, even if $ind$ is used. Induction on the formula 
$$ \forall n \forall x ((P(f(x,n)) \to P(g(x,n))) \to (P(x) \to P(g(x,n))))$$
fails. To prove the end-sequent an inductive lemma is needed, i.e. something which implies $\forall n \forall x (P(x) \to P(f(x,n)))$ and cannot be eliminated.
\end{example}
There are methods for performing cut-elimination in presence of induction \cite{brotherston2011sequent,mcdowell2000cut}, however, the resulting proofs do not have the subformula property and Herbrand's theorem cannot be realized.
If induction is represented via schemata of proofs (see e.g.~\cite{DLRW13,LPW17}) schematic cut-elimination methods can be defined which allow the extraction of so-called {\em Herbrand schemata}, i.e. a generalization of Herbrand's theorem to schematic proofs. 
The underlying cut-elimination method is {\em schematic Ceres} \cite{DLRW13,LPW17,Thesis.Lolic.2020}, and in \cite{BHLRS.2008} the method was successfully applied to F\"urstenberg's proof of the infinitude of primes, which is a complicated proof using topological concepts (see~\cite{Fuerstenberg}). F\"urstenberg's proof was formalized as a first-order schema, i.e. a sequence of proofs indexed by the number of primes assumed to exist, and schematic Ceres was applied to the entire sequence. The analysis was performed in a semi-automated way, and  major parts of the analysis had to be performed by hand. Nevertheless, the analysis showed that from F\"urstenberg's proof Euclid's elementary proof could be obtained. Though a fully automated analysis of this proof is not yet within reach, this example reveals the need for the development of a formal language for analyzing proofs with induction.

An open question has been the relation of proof schemata to systems of arithmetic. In particular it was not known whether the classes of proofs specifiable in primitive recursive arithmetic and via proof schemata coincide.  In \cite{LPAR2024C} it was shown that quantifier-free proofs in primitive recursive arithmetic (quantifier-free PRA-proofs) can be simulated by the proof schema formalism. 
In this paper we generalize and extend the results from \cite{LPAR2024C} by considering also cases where quantifiers occur in PRA-proofs. We demonstrate that the proof in Example \ref{ex.PRA-proof} cannot only be formalized as a proof schema, but also its schematic Herbrand sequent can be computed.
\section{Schematic Language and PRA}

\begin{figure}[t]
    \centering

\begin{minipage}[b]{.5 \textwidth}

\begin{prooftree} 
\def\fCenter{\mbox{\ $\Rightarrow$\ }}
\AxiomC{}
\RightLabel{$Axiom$}
\UnaryInfC{$A \vdash A$}
\end{prooftree}
\end{minipage}\begin{minipage}[b]{.5 \textwidth}
\begin{prooftree} 
\def\fCenter{\mbox{\ $\Rightarrow$\ }}
\AxiomC{$\Gamma \vdash \Delta, F$}
\AxiomC{$\Sigma, F \vdash \Pi$}
\RightLabel{$cut$}
\BinaryInfC{$\Gamma, \Sigma \vdash \Delta, \Pi$}
\end{prooftree}
\end{minipage}

 \begin{minipage}[b]{.5 \textwidth}
\begin{prooftree} 
\def\fCenter{\mbox{\ $\Rightarrow$\ }}
\AxiomC{$ \Gamma \vdash \Delta$}
\RightLabel{$w_l$}
\UnaryInfC{$  F , \Gamma \vdash \Delta$}
\end{prooftree}
\end{minipage}\begin{minipage}[b]{.5 \textwidth}
\begin{prooftree} 
\def\fCenter{\mbox{\ $\Rightarrow$\ }}
\AxiomC{$\Gamma \vdash \Delta$}
\RightLabel{$w_r$}
\UnaryInfC{$ \Gamma \vdash \Delta, F$}
\end{prooftree}
\end{minipage}

\begin{minipage}[b]{.5 \textwidth}

\begin{prooftree} 
\def\fCenter{\mbox{\ $\Rightarrow$\ }}
\AxiomC{$ F,F, \Gamma \vdash \Delta$}
\RightLabel{$c_l$}
\UnaryInfC{$ F, \Gamma \vdash \Delta$}
\end{prooftree}
\end{minipage}\begin{minipage}[b]{.5 \textwidth}
\begin{prooftree} 
\def\fCenter{\mbox{\ $\Rightarrow$\ }}
\AxiomC{$ \Gamma \vdash \Delta, F,F$}
\RightLabel{$c_r$}
\UnaryInfC{$  \Gamma \vdash \Delta, F$}
\end{prooftree}
\end{minipage}

\begin{minipage}[b]{.5 \textwidth}

\begin{prooftree} 
\def\fCenter{\mbox{\ $\Rightarrow$\ }}
\AxiomC{$\Gamma, F, G \vdash \Delta$}
\RightLabel{$\land_l$}
\UnaryInfC{$\Gamma, F \land G \vdash \Delta$}
\end{prooftree}
\end{minipage}\begin{minipage}[b]{.5 \textwidth}
\begin{prooftree} 
\def\fCenter{\mbox{\ $\Rightarrow$\ }}
\AxiomC{$\Gamma \vdash \Delta, F$}
\AxiomC{$\Sigma \vdash \Pi, G$}
\RightLabel{$\land_r$}
\BinaryInfC{$\Gamma, \Sigma \vdash \Delta, \Pi, F \land G$}
\end{prooftree}
\end{minipage}

\begin{minipage}[b]{.5 \textwidth}
\begin{prooftree} 
\def\fCenter{\mbox{\ $\Rightarrow$\ }}

\AxiomC{$\Gamma , F \vdash \Delta$}
\AxiomC{$\Sigma , G \vdash \Pi$}
\RightLabel{$\lor_l$}
\BinaryInfC{$\Gamma, \Sigma, F \lor G \vdash \Delta, \Pi$}
\end{prooftree}
\end{minipage}\begin{minipage}[b]{.5 \textwidth}
\begin{prooftree} 
\def\fCenter{\mbox{\ $\Rightarrow$\ }}
\AxiomC{$\Gamma\vdash \Delta, F, G$}
\RightLabel{$\lor_r$}
\UnaryInfC{$\Gamma \vdash \Delta, F \lor G$}
\end{prooftree}
\end{minipage}

\begin{minipage}[b]{.5 \textwidth}
\begin{prooftree} 
\AxiomC{$\Gamma \vdash \Delta, F$}
\RightLabel{$\neg_l$}
\UnaryInfC{$\Gamma , \neg F \vdash \Delta$}
\end{prooftree}
\end{minipage}\begin{minipage}[b]{.5 \textwidth}
\begin{prooftree} 
\AxiomC{$ \Gamma, F \vdash \Delta$}
\RightLabel{$\neg_r$}
\UnaryInfC{$\Gamma \vdash \Delta, \neg F$}
\end{prooftree}
\end{minipage}

\begin{minipage}[b]{.5 \textwidth}

\begin{prooftree}
\def\fCenter{\mbox{\ $\Rightarrow$\ }}

\AxiomC{$\Gamma \vdash \Delta, F $}

\AxiomC{$\Sigma , G \vdash \Pi  $}

\RightLabel{$\to_l$}
\BinaryInfC{$ \Gamma , \Sigma, F \to G \vdash \Delta, \Pi $}

\end{prooftree}

\end{minipage}\begin{minipage}[b]{.5 \textwidth}
\begin{prooftree} 
\AxiomC{$\Gamma, F \vdash \Delta, G$}
\RightLabel{$\to_r$}
\UnaryInfC{$\Gamma  \vdash   \Delta, F \to G$}
\end{prooftree}
\end{minipage}

\begin{minipage}[b]{.5 \textwidth}
\begin{prooftree}
    \AxiomC{$\Gamma, F[t/ x] \vdash \Delta$}
    \RightLabel{$\forall_l$}
    \UnaryInfC{$\Gamma, \forall x F[x] \vdash \Delta$}
\end{prooftree}
\end{minipage}\begin{minipage}[b]{.5 \textwidth}
\begin{prooftree}

    \AxiomC{$\Gamma \vdash F[y / x], \Delta$}
    \RightLabel{$\forall_r$}
    \UnaryInfC{$\Gamma \vdash \forall F, \Delta$}
    \end{prooftree}
\end{minipage}

\begin{minipage}[b]{.5 \textwidth}
\begin{prooftree}

    \AxiomC{$\Gamma , F[y/x] \vdash \Delta$}
    \RightLabel{$\exists_l$}
    \UnaryInfC{$\Gamma, \exists F \vdash \Delta$}
    \end{prooftree}
\end{minipage}\begin{minipage}[b]{.5 \textwidth}
\begin{prooftree}

    \AxiomC{$\Gamma \vdash F[t/x], \Delta$}
    \RightLabel{$\exists_r$}
    \UnaryInfC{$\Gamma \vdash \exists F, \Delta$}
    \end{prooftree}
\end{minipage}

    \caption{The rules for \textbf{LK}.
    In the rule $Axiom$, $A$ is quantifier free. In the rules $\forall_r$ and $\exists_l$ the variable $y$, the eigenvariable, cannot occur free in the lower sequent. In the rules $\forall_l$ and $\exists_r$ $t$ is an arbitrary term. Note that we use a restricted form of \textbf{LK} as we do not allow the application of $\forall_r$ and $\exists_l$ on induction variables.
    }
    \label{fig:LK}
\end{figure}

In~\cite{Thesis.Lolic.2020} and~\cite {CLL.2021} schematic first-order languages were defined as a basis for inductive proof analysis. These languages are based on primitive recursive definitions of function symbols, predicate symbols and proofs. In this paper we focus on the language of primitive recursive arithmetic {\PRA} as defined in~\cite{Girard} where we have only the equality as predicate symbol but infinitely many function symbols. The theory contains quantifier-free axioms and an induction rule with quantifier-free induction formulas. 

We define the respective calculus as Gentzen's \textbf{LK} (see Figure \ref{fig:LK}), extended by an equational theory, and incorporating the following induction rule:

\begin{prooftree}
    \AxiomC{$\Gamma \vdash \Delta, F(0)$}
    \AxiomC{$\Gamma, F(y) \vdash \Delta, F(y+1)$}
    \RightLabel{$ind$}
    \BinaryInfC{$\Gamma \vdash \Delta, F(n)$}
\end{prooftree}
where $y$ and $n$ are variables of sort $\omega$ (the type of the natural numbers). $y$ does not occur in $\Gamma, \Delta, F(0)$ and $F$ is quantifier-free.
Note that in \cite{Girard}, the induction variable is an arbitrary term. Our restriction to a variable of sort $\omega$ is equivalent, as it is easy to see that any arbitrary term can be simulated in the conclusion.
The calculus resulting from the combination of the rules from Figure \ref{fig:LK}, an equational theory $\mathcal{E}$ and $ind$ is denoted by {\PRA}.

The language is schematic in the sense that, in addition to successor $s$ and constant $\zerob$, it contains a countably infinite set of function symbols
$\Fcal$ (of arbitrary arity) together with primitive recursive definitions of functions based on these symbols. So if $g,h$ are in $\Fcal$, $h$ is $n$ -- ary and $g$ is $n+2$ -- ary we can choose an $n+1$ -- ary symbol $f$ and add a set of two equations 
$$\Ecal(f) = \{f(\vec{x},\zerob)=  h(\vec{x}),\ f(\vec{x},s(y)) = g(\vec{x},y,f(\vec{x},y))\}.$$
The function symbols in $\Fcal$ have to be partially ordered such that in any definition $\Ecal(f)$ as defined above $h<f$ and $g<f$. With $\Ecal$ we denote the union of the sets $\Ecal(f)$ for 
$f \in \Fcal$. The minimal elements in $\Fcal$ must fulfill the condition that 
$h(\vec{x})$ and $g(\vec{x},y,z)$ can be identified with terms over the signature $\{s,\zerob\}$. As an example we may consider the definition of $+$ and $*$ via two binary function symbols $f$ and $g$:
\begin{eqnarray*}
\Ecal(f) &=& \{f(x,\zerob) = x,\ f(x,s(y)) = s(f(x,y))\},\\
\Ecal(g) &=& \{g(x,\zerob) = \zerob,\ g(x,s(y)) = f(x,g(x,y))\}.
\end{eqnarray*}  
Here $f<g$ and $f$ is minimal. The characteristic feature of {\PRA} is the fact that every ground term in this language evaluates to a {\em numeral}; numerals are elements of the form $s^n(\zerob)$ for $n \in \N$, the set of all numerals is denoted by $\Num$. 
We introduce the concept of {\em parameter} to classify a subset of the first-order variables $V$ which are supposed to be evaluated. Thereby we fix the domain of interpretation for {\PRA} to the standard model of arithmetic with domain $\N$ and $0$ for $\zerob$ and successor for $s$. Parameters will play the role of induction eigenvariables in proofs. The set of parameters is denoted by $\Pcal$. As an example, a schematic version of  Example~\ref{ex.PRA-proof} would consider $k$ as a parameter and the other variables as ''ordinary'' first-order variables. For proof schemata to be defined in Section~\ref{sec.proofschema} the parameters will be crucial in defining the semantics of schemata. 
\begin{definition}[parameter assignment]\label{def.par-assign}
A {\em parameter assignment} $\sigma$ is a mapping $\Pcal \to \Num$ with the following extensions to terms and formulas:
\begin{itemize}
\item $\sigma(x) = x$ for $x \in V \setminus \Pcal$,
\item $\sigma(\zerob) = \zerob,\ \sigma(s(t)) = s(\sigma(t))$,
\item For an $n$ -- ary $f \in \Fcal$ we have $\sigma(f(t_1,\ldots,t_n)) = f(\sigma(t_1),\ldots,\sigma(t_n))$,
\item $\sigma(\neg A) = \neg \sigma(A)$,
\item $\sigma(A \circ B)  = \sigma(A) \circ \sigma(B)$ for $\circ \in \{\land,\lor,\impl\}$,
\item $\sigma(Qx.A) = Qx.\sigma(A)$ if $x \in V \setminus \Pcal$ and $Q \in \{\forall,\exists\}$.
\end{itemize}
The set of all parameter assignments is denoted by $\Scal$.
\end{definition}
As already mentioned every term containing parameters can be evaluated under parameter assignments; 
terms containing only parameters and no variables in $V \setminus \Pcal$ evaluate to numerals, others are merely {\em partially evaluated}.  
\begin{definition}[evaluation]\label{def.par-evaluation}
Let $f$ be an $(n+1)$ -- ary function symbol in $\Fcal$ with $n \in \Pcal$ and 
$$\Ecal(f) = \{f(\vec{x},\zerob)=  h(\vec{x}),\ f(\vec{x},s(n)) = g(\vec{x},y,f(\vec{x},n))\} \mbox{ and}$$
$t_1,\ldots,t_l$ be terms, $\vec{t} = (t_1,\ldots,t_l)$ and $k \in \Num$. Then, 
\begin{itemize}
\item $f(\vec{t},\zerob)\Eval = h(\vec{t})\Eval$,
\item $f(\vec{t},s(\bar{k}))\Eval = g(\vec{t},\bar{k},f(\vec{t},\bar{k})\Eval))\Eval$, 
\item $r\Eval = r$ if $r$ is a term not containing a symbol in $\Fcal$.
\end{itemize}
The evaluation operator has to be applied recursively to all subterms of a term containing symbols in $\Fcal$. 
Assume that $\sigma$ is a parameter assignment and $t$ is a term with $\sigma(t)\Eval = \bar{k}$ and $r \in \Fcal$ then 
$\sigma(r(\vec{s},t))\Eval =r(\sigma(\vec{s}))\Eval,\bar{k})\Eval$.
\end{definition}
The evaluation under $\sigma$ can be extended to formulas homomorphically in an obvious way.
\begin{example}\label{ex.evaluation}
Let $x,n \in V$, $x \in V\setminus \Pcal$ and $n \in \Pcal$; assume that $\sigma(n) = \bar{1}$. Let $f \in \Fcal$ and 
$$\Ecal(f) = \{f(x,\zerob) = x,\ f(x,s(n)) = s(f(x,n))\}.$$
Then $\sigma(f(x,s(n)))\Eval = f(\sigma(x)\Eval,\sigma(s(n))\Eval) = f(x,\bar{2})\Eval$ by $\sigma(x)=x$ and 
\begin{itemize}
\item $\sigma(s(n))\Eval = \sigma(s(n)) = s(\sigma(n)) = s(\bar{1}) =\bar{2}$. 
\end{itemize}
Furthermore 
$$f(x,\bar{2})\Eval = s(f(x,\bar{1}))\Eval = s(s(f(x,\zerob)))\Eval  = s(s(x))$$
and so $\sigma(f(x,s(n))\Eval = s(s(x))$. We see that by distinguishing parameters and variables we obtain just a partial evaluation, i.e. terms need not evaluate to numerals but just to other terms containing variables.  \\[1ex]
Let $A = \forall x.f(x,n) = f(n,x)$ and $\sigma(n)$ as above. Then 
$$\sigma(A)\Eval = \forall x.s(x) = f(\bar{1},x).$$
We also see that $\{\sigma(A) \mid \sigma \in \Scal\}$  defines the infinite set of formulas $\{\forall x.s^n(x) = f(\bar{n},x) \mid n \in \N\}$. So we can consider the formula $A$ as a {\em formula schema} describing an infinite sequence of formulas
\end{example}

\section{Proof Schema}\label{sec.proofschema}

The general idea of a proof schema is to represent a proof containing induction by a finite description of an infinite sequence of proofs without induction inferences: Assume a proof $\varphi$ of the end-sequent $\vdash \forall x A(x)$ that uses an induction inference. Instead of considering the proof $\varphi$, we instead consider the infinite sequence of proofs $\varphi_0$, $\varphi_1$, $\varphi_2$, $\ldots$ of end-sequents $\vdash A(\bar{0})$, $\vdash A(\bar{1})$, $\vdash A(\bar{2})$ $\ldots$
The task is to find a finite description of this infinite sequence of proofs, the proof schema. A proof schema always represents a parameterized sequence, and an evaluation under a parameter assignment $\bar{n}$ results in the proof $\varphi_n$ of $\vdash A(\bar{n})$.
The underlying problem, that initially lead to the development of proof schemata, is to be able to analyze inductive proofs and extract their Herbrand sequents. Indeed, each of the proofs $\varphi_0$, $\varphi_1$, $\varphi_2$, $\ldots$ is a simple {\LK}-proof without induction inferences, and thus enjoys cut-elimination resulting in an analytic proof. 
The concept of proof schema was initially introduced in \cite{DLRW13,LPW17} to address schemata involving a single parameter. Later, it was expanded to accommodate an arbitrary number of induction parameters \cite{Thesis.Lolic.2020}.

Formally, proof schemata are constructed using proofs in an extension of {\LK} by an equational theory. 
First, let us define the concept of schematic sequents (the definitions below are from \cite{Thesis.Lolic.2020}).
\begin{definition}[schematic sequents]
A schematic sequent is a sequent of the form $F_1 , \ldots , F_\alpha \vdash G_1 , \ldots , G_\beta$ where the $F_i$ and $G_j$ for $1 \leq i \leq \alpha$ and $1 \leq j \leq \beta$ are formula schemata.
Let $S \colon F_1 , \ldots , F_\alpha \vdash G_1 , \ldots , G_\beta$ be a schematic sequent and $\sigma$ a parameter assignment. Then the evaluation of $S$ under $\sigma$ is $\sigma(S)\Eval \colon \sigma(F_1)\Eval , \ldots , \sigma(F_\alpha)\Eval \vdash \sigma(G_1)\Eval , \ldots , \sigma(G_\beta)\Eval$. 
\end{definition}
\begin{definition} \label{def.LKE}
Let $\mathcal{E}$ be an equational theory. We extend the calculus \LK\ by the
$\mathcal{E}$ inference rule
$\dfrac{S(t)}{S(t')}\mathcal{E}$
where the term or input term schema $t$ in the schematic sequent $S$ is replaced by a term or input term schema $t'$ where $t=t'$ is an instance of an equation in $\Ecal$.
\end{definition}
The definitions below will use the schematic standard axiom set $\Acal_s$.
\begin{definition}[schematic standard axiom set] 
Let $\Acal_s$ be the smallest set of schematic sequents that is closed under substitution containing all sequents of the form $A \vdash A$ for arbitrary atomic formula schemata $A$. Then $\Acal_s$ is called the schematic standard axiom set. 
\end{definition}
Schematic derivations can be understood as parameterized sequences of $\LK$-derivations where new kinds of axioms in the form of labeled sequents are included. These labeled sequents serve the purpose to establish recursive call structures in the proof.   
For constructing schematic derivations we introduce a countably infinite set $\Delta$ of {\em proof symbols} which are used to label the individual proofs of a proof schema. A proof schema uses only a finite set of proof symbols $\Delta^*\subset \Delta$. We assign an arity $A(\delta)$ to every $\delta \in \Delta^*$, $A(\delta)$ is the arity of the input parameters for the proof labeled by $\delta$. Also, we need a concept of proof labels which serve the purpose to relate some leafs of the proof tree to recursive calls. 
\begin{definition}[proof label]\label{def.prooflabel}
Let $\delta \in \Delta$ and $\vartheta$ be a parameter substitution. Then the pair $(\delta,\vartheta)$ is called a proof label.
\end{definition}
\begin{definition}[labeled sequents and derivations]
Let $S$ be a schematic sequent and $(\delta,\vartheta)$ a proof label, then $(\delta,\vartheta)\colon S$ is a {\em labeled sequent}. A {\em labeled} derivation is a derivation $\pi$ where all leaves are labeled.
\end{definition}  
In the definition below we will define a proof schema over a base-case proof (for parameter $0$) and a step-case proof (for parameter $m+1$), where initial sequents are either axioms, or end-sequents from previously defined base- or step-case proofs. 
In general, the step-case proof for some proof symbol $\delta$ uses as initial sequent its own end-sequent, but under a parameter assignment $m$. Evaluating a schematic derivation means that initial sequents, which are no axioms, have to be replaced by their derivations.  
\begin{definition}[parameter replacement]
Let $\vec{m},\vec{n}$ be tuples of parameters. A parameter replacement on $\vec{n}$ with respect to $\vec{m}$ is a replacement substituting every parameter $p$ in $\vec{n}$ by a term $t_p$, where the parameters of $t_p \in T^\omega$ are in $\vec{m}$.
\end{definition}
\begin{definition}[schematic deduction and proof schema]\label{def.proofschema} 
Let $\Dcal$ be the tuple $(\delta_0, \Delta^*, \Pi)$. $\Dcal$ is called a schematic deduction from a finite set of schematic sequents $\Scal$ if the following conditions are fulfilled:
\begin{itemize}
	\item $\Delta^*$ is a finite subset of $\Delta$.
	\item $\delta_0 \in \Delta^*$, and $\delta_0 > \delta'$ for all $\delta' \in \Delta^*$ such that $\delta' \neq \delta_0$. $\delta_0$ is called the main symbol.
	\item To every $\delta \in \Delta^*$ we assign a parameter tuple $\vec{n}_\delta$ of pairwise different parameters (called the passive parameters), and a parameter $m_\delta$ (called the active parameter).
	\item $\Pi$ is a set of pairs $\{(\Pi(\delta, \vec{n}_\delta, m_\delta), S(\delta, \vec{n}_\delta, m_\delta)\}$, where $S(\delta, \vec{n}_\delta, m_\delta)$ is a schematic sequent, and 
	$$\Pi(\delta, \vec{n}_\delta, m_\delta) = \{(\delta,\vec{n}_\delta, m_\delta) \to \rho(\delta,\vec{n}_\delta, m_\delta)\},$$
	where $\rho(\delta,\vec{n}_\delta, 0) = \rho_0(\delta,\vec{n}_\delta)$, and $\rho(\delta,\vec{n}_\delta, s(m_\delta)) = \rho_1(\delta,\vec{n}_\delta, m_\delta)$, 
	and there exists a (possibly empty) finite set of schematic sequents $\Ccal(\delta)$ such that 
\begin{enumerate}
	\item $\rho_0(\delta,\vec{n}_\delta)$ is a deduction of $S(\delta, \vec{n}_\delta, 0)$ from $\Scal \union \Ccal(\delta)$,
	\item $\rho_1(\delta,\vec{n}_\delta, m_\delta)$ is a deduction of $S(\delta, \vec{n}_\delta, m_\delta+1)$ from $\{(\delta, \Psi) \colon S(\delta, \vec{n}_\delta, m_\delta)\} \union \Scal \union \Ccal(\delta)$, where $(\delta, \Psi)$ is a label, and $\Psi$ the empty parameter replacement,
\item for all $S' \in \Ccal(\delta)$, $S' = (\delta', \Psi) \colon S(\delta', \vec{n}_{\delta'}, m_{\delta'})\Psi$ where $(\delta', \Psi)$ is a label, $\delta' \in \Delta^*$ with $\delta > \delta'$ and $\Psi$ is a parameter replacement on $(\vec{n}_{\delta'}$, $m_{\delta'})$ w.r.t. $(\vec{n}_\delta, m_\delta)$ such that the conditions $1.$ and $2.$ hold for $\delta'$ . 
\end{enumerate}
If $\Scal = \Acal_s$ we call $\Dcal$ a {\em proof schema} of $S(\delta_0, \vec{n}_{\delta_0}, m_{\delta_0})$.
\end{itemize}
\end{definition}
In the example below we will formalize the PRA-proof from Example \ref{ex.PRA-proof} as a proof schema.
\begin{example}\label{ex.proofschema}
In this example we will construct a proof schema of the end-sequent
$$\forall x (P(x) \to P(s(x))) \vdash \forall x (P(f(x,n)) \to P(g(x,n))) \to (P(x) \to P(g(x,n))) $$ 
where 
$$f(x,0) = x, \quad f(x,s(n)) = s(f(x,n)).$$
Note that this end-sequent is equivalent over the standard model to this in Example \ref{ex.PRA-proof} for which no Herbrand sequent could be obtained. The only difference is that the $\forall n$ in the end-sequent of Example \ref{ex.PRA-proof} disappears in the proof schema formalism, as $n$ will be the schema's inductive parameter and we do not allow strong quantification of inductive parameters. Note further that in this example we consider only one parameter, but in general, we allow arbitrarily many.
To start, let us first define a proof schema 
$$\Dcal_1 = \{(\delta, \rho(\delta,0),\rho(\delta,n+1))\},$$
with end-sequent
$$S(\delta) \colon \forall x (P(x) \to P(s(x))) \vdash \forall x (P(x) \to P(f(x,n))).$$
We define $\rho(\delta,0)$ as follows:
\begin{prooftree}
	\AxiomC{$P(f(a,0)) \vdash P(f(a,0))$}
	\RightLabel{$\Ecal$}
	\UnaryInfC{$P(a) \vdash P(f(a,0))$}
	\RightLabel{$\to_r$}
	\UnaryInfC{$\vdash P(a) \to P(f(a,0))$}
	\RightLabel{$\forall_r$}
	\UnaryInfC{$\vdash \forall x (P(x) \to P(f(x,0)))$}
	\RightLabel{$w_l$}
	\UnaryInfC{$\forall x (P(x) \to P(s(x))) \vdash \forall x (P(x) \to P(f(x,0)))$}
\end{prooftree}
$\rho(\delta,n+1)$ is defined as follows:
\begin{prooftree}
	\AxiomC{$(\delta, \emptyset) \colon S(\delta)$}
	\AxiomC{$(1)$}
	\RightLabel{$cut, c_l$}
	\BinaryInfC{$\forall x (P(x) \to P(s(x))) \vdash \forall x (P(x) \to P(f(x,n+1)))$}
\end{prooftree}
where $(1)$ is
\begin{prooftree}
	\AxiomC{$P(a) \vdash P(a)$}
	\AxiomC{$P(f(a,n)) \vdash P(f(a,n))$}
	\AxiomC{$P(f(a,n+1)) \vdash P(f(a,n+1))$}
	\RightLabel{$\Ecal$}
	\UnaryInfC{$P(s(f(a,n))) \vdash P(f(a,n+1))$}
	\RightLabel{$\to_l$}
	\BinaryInfC{$P(f(a,n)), P(f(a,n)) \to P(s(f(a,n))) \vdash P(f(a,n+1))$}
	\RightLabel{$\forall_l$}
	\UnaryInfC{$P(f(a,n)), \forall x (P(x) \to P(s(x))) \vdash P(f(a,n+1))$}
	\RightLabel{$\to_l$}
	\BinaryInfC{$P(a), P(a) \to P(f(a,n)), \forall x (P(x) \to P(s(x))) \vdash P(f(a,n+1))$}
	\RightLabel{$\to_r$}
	\UnaryInfC{$P(a) \to P(f(a,n)), \forall x (P(x) \to P(s(x))) \vdash P(a) \to  P(f(a,n+1))$}
	\RightLabel{$\forall_l$}
	\UnaryInfC{$\forall x (P(x) \to P(f(x,n))), \forall x (P(x) \to P(s(x))) \vdash P(a) \to  P(f(a,n+1))$}
	\RightLabel{$\forall_r$}
	\UnaryInfC{$\forall x (P(x) \to P(f(x,n))), \forall x (P(x) \to P(s(x))) \vdash \forall x (P(x) \to  P(f(x,n+1)))$}
\end{prooftree}
Now we construct the proof schema 
$$\Dcal = \{(\delta', \rho(\delta',n),\rho(\delta',n+1))\} \cup \Dcal_1,$$
with the end-sequent
$$S(\delta') \colon \forall x (P(x) \to P(s(x))) \vdash \forall x (P(f(x,n)) \to P(g(x,n))) \to (P(x) \to P(g(x,n))),$$
and $\delta' > \delta$. There is no internal recursion in $\delta'$ needed, hence we only define $\rho(\delta',n)$ as follows:
\begin{prooftree}
	\AxiomC{$(\delta, \emptyset) \colon S(\delta)$}
	\AxiomC{$(2)$}
	\RightLabel{$cut$}
	\BinaryInfC{$\forall x (P(x) \to P(s(x))) \vdash \forall x (P(f(x,n)) \to P(g(x,n))) \to (P(x) \to P(g(x,n)))$}
\end{prooftree}
where $(2)$ is
\begin{prooftree}
	\AxiomC{$P(c) \vdash P(c)$}
	\AxiomC{$P(f(c,n)) \vdash P(f(c,n))$}
	\AxiomC{$P(g(c,n)) \vdash P(g(c,n))$}
	\RightLabel{$\to_l$}
	\BinaryInfC{$P(f(c,n)) \to P(g(c,n)), P(f(c,n)) \vdash P(g(c,n))$}
	\RightLabel{$\to_l$}
	\BinaryInfC{$P(c), P(f(c,n)) \to P(g(c,n)), P(c) \to P(f(c,n)) \vdash P(g(c,n))$}
	\RightLabel{$\to_r$}
	\UnaryInfC{$P(f(c,n)) \to P(g(c,n)), P(c) \to P(f(c,n)) \vdash P(c) \to P(g(c,n))$}
	\RightLabel{$\to_r$}
	\UnaryInfC{$P(c) \to P(f(c,n)) \vdash (P(f(c,n)) \to P(g(c,n))) \to (P(c) \to P(g(c,n)))$}
	\RightLabel{$\forall_l$}
	\UnaryInfC{$\forall x (P(x) \to P(f(x,n))) \vdash (P(f(c,n)) \to P(g(c,n))) \to (P(c) \to P(g(c,n)))$}
	\RightLabel{$\forall_r$}
	\UnaryInfC{$\forall x (P(x) \to P(f(x,n))) \vdash \forall x (P(f(x,n)) \to P(g(x,n))) \to (P(x) \to P(g(x,n)))$}
\end{prooftree}
\end{example}
Proof schemata define infinite sequences of proofs, and can be evaluated under parameter assignments. 
\begin{definition}[evaluation of proof schema]
Let $\Dcal = (\delta_0, \Delta^*, \Pi)$ be a proof schema, and $\sigma$ a parameter assignment. In defining the evaluation of the proof schema, $\Dcal\sigma\Eval$, we proceed by double induction on the ordering of proof symbols and the assignments $\sigma$.
\begin{itemize}
\item Let $\delta_i$ be a minimal element in $\Delta^*$. 
\begin{enumerate}
\item $\sigma(m_{\delta_i}) = 0$.

Then, by definition of a proof schema, $\rho_0(\delta_i,\vec{n}_{\delta_i})$ is a proof with $\LK$-inferences and inferences for defined symbols that contain schematic sequents. Let $S_1, \ldots S_n$ be all the schematic sequents in $\rho_0(\delta_i,\vec{n}_{\delta_i})$. Then the evaluation of $\rho_0(\delta_i,\vec{n}_{\delta_i})$ under $\sigma$ is denoted by $\rho_0(\delta_i,\vec{n}_{\delta_i})\Eval$ and obtained by replacing all $S_1, \ldots, S_n$ in $\rho_0(\delta_i,\vec{n}_{\delta_i})$ by $\sigma(S_1)\Eval, \ldots, \sigma(S_n)\Eval$ and omitting the inferences for defined symbols. 
\item $\sigma(m_{\delta_i}) = \alpha >0$.

Evaluate all schematic sequents except the leaves $(\delta_i, \emptyset) \colon S(\delta_i, \vec{n}_{\delta_i},m_{\delta_i})$ under $\sigma$.
Let $\sigma[m_{\delta_i} /\alpha-1]$ be defined as $\sigma[m_{\delta_i}/\alpha-1](p) = \sigma(p)$ for all $p \neq m_{\delta_i}$ and $\sigma[m_{\delta_i}/\alpha-1](m_{\delta_i}) = \alpha-1$.
Then we replace the labeled sequent $(\delta_i, \emptyset) \colon S(\delta_i, \vec{n}_{\delta_i},m_{\delta_i})$ by the proofs $\rho_0(\delta_i,\vec{n}_{\delta_i})\sigma[m_{\delta_i} /\alpha - 1]\Eval$ if $\alpha - 1 = 0$ and by $\rho_1(\delta_i,\vec{n}_{\delta_i},m_{\delta_i})\sigma[m_{\delta_i} /\alpha - 1]\Eval$ if $\alpha - 1 > 0$.
The result is an $\LK$-proof.
\end{enumerate}
\item $\delta_i \in \Delta^*$ is not minimal.
\begin{enumerate}
\item $\sigma(m_{\delta_i})=0$.

Evaluate all schematic sequents except the labeled sequents of the form $(\delta', \Psi) \colon \\ S(\delta',\vec{n}_{\delta'}, m_{\delta'})\Psi$ for $\delta_i > \delta'$ and the corresponding parameter replacement $\Psi$ under $\sigma$. 
Then replace the labeled sequent $(\delta', \Psi) \colon S(\delta',\vec{n}_{\delta'},m_{\delta'})\Psi$ by the proof $\rho_0(\delta',\vec{n}_{\delta'})\Psi\sigma\Eval$ if $\sigma(m_{\delta'})=0$ and by the proof $\rho_1(\delta',\vec{n}_{\delta'},m_{\delta'})\Psi\sigma\Eval$ otherwise.
\item $\sigma(m_{\delta_i}) = \alpha >0$.

As above, except for the labeled sequents $(\delta_i, \emptyset) \colon S(\delta_i,\vec{n}_{\delta_i},m_{\delta_i})$ which are replaced by the proof $\rho_0(\delta_i,\vec{n}_{\delta_i})\sigma[m_{\delta_i}/\alpha-1]$ if $\alpha-1 = 0$ and by the proof $\rho_1(\delta_i,\vec{n}_{\delta_i},m_{\delta_i})\sigma[m_{\delta_i}/\alpha-1]$ otherwise.
\end{enumerate}
\end{itemize}
$\Dcal \sigma \Eval$ is defined as $\rho_0(\delta_0,\vec{n}_{\delta_0})\sigma\Eval$ for the $<$-maximal symbol $\delta_0$ if $\sigma(m_{\delta_0})=0$, and by $\rho_1(\delta_0,\vec{n}_{\delta_0},m_{\delta_0})\sigma\Eval$ if $\sigma(m_{\delta_0})>0$.
\end{definition}
We will illustrate the evaluation of a proof schema for a concrete parameter in the example below.
\begin{example}
Let $\Dcal$ be the proof schema as defined in Example \ref{ex.proofschema}, and $\sigma(n) = 0$. Then $\Dcal\sigma\Eval$ is obtained by considering the derivation $\rho(\delta',0)$, where the proof call to $(\delta, \emptyset) \colon S(\delta) \{n \leftarrow 0\}$ is replaced by the derivation $\rho(\delta,0)$, hence we obtain
\begin{prooftree}
	\AxiomC{$P(f(a,0)) \vdash P(f(a,0))$}
	\RightLabel{$\Ecal$}
	\UnaryInfC{$P(a) \vdash P(f(a,0))$}
	\RightLabel{$\to_r$}
	\UnaryInfC{$\vdash P(a) \to P(f(a,0))$}
	\RightLabel{$\forall_r$}
	\UnaryInfC{$\vdash \forall x (P(x) \to P(f(x,0)))$}
	\RightLabel{$w_l$}
	\UnaryInfC{$\forall x (P(x) \to P(s(x))) \vdash \forall x (P(x) \to P(f(x,0)))$}
	\AxiomC{$(1)$}
	\RightLabel{$cut$}
	\BinaryInfC{$\forall x (P(x) \to P(s(x))) \vdash \forall x (P(f(x,0)) \to P(g(x,0))) \to (P(x) \to P(g(x,0)))$}
\end{prooftree}
where $(1)$ is
\begin{prooftree}
	\AxiomC{$P(c) \vdash P(c)$}
	\AxiomC{$P(f(c,0)) \vdash P(f(c,0))$}
	\AxiomC{$P(g(c,0)) \vdash P(g(c,0))$}
	\RightLabel{$\to_l$}
	\BinaryInfC{$P(f(c,0)) \to P(g(c,0)), P(f(c,0)) \vdash P(g(c,0))$}
	\RightLabel{$\to_l$}
	\BinaryInfC{$P(c), P(f(c,0)) \to P(g(c,0)), P(c) \to P(f(c,0)) \vdash P(g(c,0))$}
	\RightLabel{$\to_r$}
	\UnaryInfC{$P(f(c,0)) \to P(g(c,0)), P(c) \to P(f(c,0)) \vdash P(c) \to P(g(c,0))$}
	\RightLabel{$\to_r$}
	\UnaryInfC{$P(c) \to P(f(c,0)) \vdash (P(f(c,0)) \to P(g(c,0))) \to (P(c) \to P(g(c,0)))$}
	\RightLabel{$\forall_l$}
	\UnaryInfC{$\forall x (P(x) \to P(f(x,0))) \vdash (P(f(c,0)) \to P(g(c,0))) \to (P(c) \to P(g(c,0)))$}
	\RightLabel{$\forall_r$}
	\UnaryInfC{$\forall x (P(x) \to P(f(x,0))) \vdash \forall x (P(f(x,0)) \to P(g(x,0))) \to (P(x) \to P(g(x,0)))$}
\end{prooftree}
Now let us consider $\sigma(n) = 1$. Then $\Dcal \sigma \Eval$ is obtained by considering $\rho(\Delta',1)$ and replacing the proof link to $(\delta, \emptyset) \colon S(\delta)$ by the derivation $\rho(\delta,1)$. Note that in $\rho(\delta,1)$ (which in our formalism is denoted by $\rho(\delta,0+1)$) we have to replace the self-referencing proof link $(\delta, \emptyset) \colon S(\delta)$ by the derivation $\rho(\delta,0)$. Therefore, we obtain
\begin{prooftree}
	\AxiomC{$P(f(a,0)) \vdash P(f(a,0))$}
	\RightLabel{$\Ecal$}
	\UnaryInfC{$P(a) \vdash P(f(a,0))$}
	\RightLabel{$\to_r$}
	\UnaryInfC{$\vdash P(a) \to P(f(a,0))$}
	\RightLabel{$\forall_r$}
	\UnaryInfC{$\vdash \forall x (P(x) \to P(f(x,0)))$}
	\RightLabel{$w_l$}
	\UnaryInfC{$\forall x (P(x) \to P(s(x))) \vdash \forall x (P(x) \to P(f(x,0)))$}
	\AxiomC{$(1)$}
	\RightLabel{$cut, c_l$}
	\BinaryInfC{$\forall x (P(x) \to P(s(x))) \vdash \forall x (P(x) \to P(f(x,1)))$}
	\AxiomC{$(2)$}
	\RightLabel{$cut$}
	\BinaryInfC{$\forall x (P(x) \to P(s(x))) \vdash \forall x (P(f(x,1)) \to P(g(x,1))) \to (P(x) \to P(g(x,1)))$}
\end{prooftree}
where $(1)$ is
\begin{prooftree}
	\AxiomC{$P(a) \vdash P(a)$}
	\AxiomC{$P(f(a,0)) \vdash P(f(a,0))$}
	\AxiomC{$P(f(a,1)) \vdash P(f(a,1))$}
	\RightLabel{$\Ecal$}
	\UnaryInfC{$P(s(f(a,0))) \vdash P(f(a,1))$}
	\RightLabel{$\to_l$}
	\BinaryInfC{$P(f(a,0)), P(f(a,0)) \to P(s(f(a,0))) \vdash P(f(a,1))$}
	\RightLabel{$\forall_l$}
	\UnaryInfC{$P(f(a,0)), \forall x (P(x) \to P(s(x))) \vdash P(f(a,1))$}
	\RightLabel{$\to_l$}
	\BinaryInfC{$P(a), P(a) \to P(f(a,0)), \forall x (P(x) \to P(s(x))) \vdash P(f(a,1))$}
	\RightLabel{$\to_r$}
	\UnaryInfC{$P(a) \to P(f(a,0)), \forall x (P(x) \to P(s(x))) \vdash P(a) \to  P(f(a,1))$}
	\RightLabel{$\forall_l$}
	\UnaryInfC{$\forall x (P(x) \to P(f(x,0))), \forall x (P(x) \to P(s(x))) \vdash P(a) \to  P(f(a,1))$}
	\RightLabel{$\forall_r$}
	\UnaryInfC{$\forall x (P(x) \to P(f(x,0))), \forall x (P(x) \to P(s(x))) \vdash \forall x (P(x) \to  P(f(x,1)))$}
\end{prooftree}
and $(2)$ is
\begin{prooftree}
	\AxiomC{$P(c) \vdash P(c)$}
	\AxiomC{$P(f(c,1)) \vdash P(f(c,1))$}
	\AxiomC{$P(g(c,1)) \vdash P(g(c,1))$}
	\RightLabel{$\to_l$}
	\BinaryInfC{$P(f(c,1)) \to P(g(c,1)), P(f(c,1)) \vdash P(g(c,1))$}
	\RightLabel{$\to_l$}
	\BinaryInfC{$P(c), P(f(c,1)) \to P(g(c,1)), P(c) \to P(f(c,1)) \vdash P(g(c,1))$}
	\RightLabel{$\to_r$}
	\UnaryInfC{$P(f(c,1)) \to P(g(c,1)), P(c) \to P(f(c,1)) \vdash P(c) \to P(g(c,1))$}
	\RightLabel{$\to_r$}
	\UnaryInfC{$P(c) \to P(f(c,1)) \vdash (P(f(c,1)) \to P(g(c,1))) \to (P(c) \to P(g(c,1)))$}
	\RightLabel{$\forall_l$}
	\UnaryInfC{$\forall x (P(x) \to P(f(x,1))) \vdash (P(f(c,1)) \to P(g(c,1))) \to (P(c) \to P(g(c,1)))$}
	\RightLabel{$\forall_r$}
	\UnaryInfC{$\forall x (P(x) \to P(f(x,1))) \vdash \forall x (P(f(x,1)) \to P(g(x,1))) \to (P(x) \to P(g(x,1)))$}
\end{prooftree}

\end{example}

In \cite{Thesis.Lolic.2020} the Skolemized version of the proof schema in Example \ref{ex.proofschema} was analyzed and a so-called Herbrand schema was obtained. A Herbrand schema can be understood as a parameterized sequence of Herbrand instances for the quantified formulas occurring in the end-sequent. The reason for Skolemization lies in the schematic cut-elimination method. In this work we will not focus on this cut-elimination method or the proof analysis of schematic proofs, but only explain the general idea. 

We start with a proof schema of a Skolemized end-sequent. This sequent has to be Skolemized as the proof transformation steps in the method would not be sound if eigenvariables were present.
The proof schema is then split in two parts: In the first part we start with the formulas in the initial sequents that are ancestors of formulas in the end-sequent, and apply only the rules that operate on these formulas. By construction the thus obtained proof schema is cut-free, we call it the projection schema. In the second part, only the formulas in initial sequents that are ancestors of cut-formulas and the rules operating on these cut-ancestors are considered. By construction, as we only have formulas that are ancestors of cuts and are therefore cut out, we end up with the empty-sequent. Therefore, the initial sequents we started with are unsatisfiable and can be represented as an unsatisfiable schematic formula, which is called the characteristic formula schema. The characteristic formula schema can be refuted with a schematic refutation calculus defined in \cite{CLL.2021}. In \cite{LPAR2024LeitschLolic} this calculus was improved and it was shown that the Herband instances of the refutation schema can be extracted in the form of Herbrand schemata. It was shown in \cite{Thesis.Lolic.2020} that the Herbrand schema from the refutation of the characteristic formula can be combined with the Herbrand schema from the projection schema to obtain the Herbrand schema of the original proof schema.

For the Skolemized version of Example \ref{ex.proofschema}, we obtain (after some simplifications) for the formula $\forall x (P(x) \to P(s(x)))$ the schematic Herbrand instances 
$$\{x \leftarrow f(c,n)\},$$
where $c$ just denotes a Skolem constant introduced when Skolemizing the end-sequent. Therefore, for some numeral $\alpha$ the instances are
$$ c, s(c), s(s(c)), \ldots, s^{\alpha-1}(c).$$
\section{Simulation of Primitive Recursive Arithmetic Through Proof Schemata}

In this section, we will analyze the expressivity of proof schemata by showing that proof schemata simulate a restricted version of primitive recursive arithmetic, as defined in \cite{Girard}. In \cite{LPW17}, it was demonstrated that proof schemata are equivalent to a specific fragment of arithmetic known as \emph{k-simple induction}. This variant restricts the introduction of new eigenvariables through induction. Recently, we extended the simulation to capture proof schemata that allow an arbitrary number of parameters \cite{LPAR2024C}. In the latter work, we used a definition of primitive recursive arithmetic, as defined in \cite{Takeuti}, which does not admit quantifier introduction. In this paper, we allow quantifier introduction with the restriction that we do not allow strong quantification of induction variables. 
However, we will later demonstrate that proof schemata can indeed capture certain proofs involving strongly quantified induction variables.




We will now show the translation from proof schemata to \textbf{PRA} and back. To do this, we note that every sequent $S: \Gamma \vdash \Delta$ corresponds to an equivalent formula $\mathcal{F}(S) \coloneq \bigvee \neg \Gamma \cup \Delta$.

\begin{lemma}
 Let $\mathcal{D}$ be a proof schema with end-sequent $S$. Then there exists a \textbf{PRA} proof of $S$.
\end{lemma}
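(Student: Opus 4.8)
The plan is to prove this by induction on the ordering of proof symbols in $\Dcal$, showing that each individual proof $\rho_0$ and $\rho_1$ can be transformed into a \textbf{PRA} proof, with the recursive call structure of the proof schema being realized by applications of the $ind$ rule. The key observation is that a proof schema is essentially a primitive recursive description of an induction proof: the self-referencing labeled sequent $(\delta,\Psi)\colon S(\delta,\vec{n}_\delta,m_\delta)$ in the step case $\rho_1$ plays exactly the role of the induction hypothesis in an $ind$ inference, since $\rho_0$ derives $S(\delta,\vec{n}_\delta,0)$ and $\rho_1$ derives $S(\delta,\vec{n}_\delta,m_\delta+1)$ from $S(\delta,\vec{n}_\delta,m_\delta)$.

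First I would set up the induction on the $<$-ordering of $\Delta^*$. For a minimal proof symbol $\delta$, the set $\Ccal(\delta)$ of calls to smaller symbols is empty, so the only non-axiom leaves in $\rho_0(\delta,\vec{n}_\delta)$ and $\rho_1(\delta,\vec{n}_\delta,m_\delta)$ are the self-referencing labels. I would take the schematic sequent $S(\delta,\vec{n}_\delta,m_\delta)$ and form the associated formula $\Fcal(S(\delta,\vec{n}_\delta,m_\delta))$, treating the active parameter $m_\delta$ as the induction variable. The base-case derivation $\rho_0$ becomes (after erasing the $\Ecal$-inferences, which are absorbed into the equational theory of \textbf{PRA}) a proof of $S(\delta,\vec{n}_\delta,0)$, and the step-case derivation $\rho_1$, with its self-referencing leaf read as the induction hypothesis $S(\delta,\vec{n}_\delta,m_\delta)$, becomes a proof of $S(\delta,\vec{n}_\delta,m_\delta)\vdash S(\delta,\vec{n}_\delta,s(m_\delta))$ in the appropriate sequent form. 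Combining these two via the $ind$ rule of \textbf{PRA} yields a proof of $S(\delta,\vec{n}_\delta,m_\delta)$. For the inductive step of the outer induction, a non-minimal $\delta$ additionally has leaves labeled by smaller symbols $\delta'\in\Ccal(\delta)$; by the induction hypothesis each such $S(\delta',\vec{n}_{\delta'},m_{\delta'})\Psi$ already has a \textbf{PRA} proof, which I would splice in at those leaves (applying the parameter replacement $\Psi$ to the proof), after which the same $ind$ construction applies.

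The main obstacle I anticipate is ensuring the $ind$ rule is actually applicable, which requires the induction formula to be quantifier-free and the eigenvariable condition to be respected. The sequents in a proof schema may contain quantifiers (as in Example~\ref{ex.proofschema}, whose end-sequent has an outermost $\forall x$), so $\Fcal(S(\delta,\vec{n}_\delta,m_\delta))$ need not be quantifier-free, and a naive application of $ind$ on this formula is illegal. The resolution is that the \textbf{PRA} restriction only forbids \emph{strong} quantification on induction variables, and the proof-schema formalism is designed precisely so that the active parameter $m_\delta$ is never the target of a $\forall_r$ or $\exists_l$ inference; thus one can pull the quantifiers not involving $m_\delta$ to the outside and perform the induction on a quantifier-free matrix, reintroducing the quantifiers afterwards. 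I would need to verify carefully that the recursive structure permits this prenexing uniformly across the base and step cases and that the eigenvariable conditions on $\forall_r$ and $\exists_l$ (which, per Figure~\ref{fig:LK}, are never applied to induction variables) are preserved under the splicing of sub-proofs for smaller proof symbols.
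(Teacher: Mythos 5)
Your overall skeleton coincides with the paper's: an induction on the $<$-ordering of the proof symbols, translation of each pair $(\rho_0,\rho_1)$ into the two premises of an $ind$ inference with the self-referencing labeled leaf playing the role of the induction hypothesis, and splicing of the already-constructed \textbf{PRA} proofs (with the parameter replacement $\Psi$ applied) into the leaves labeled by smaller symbols. The paper additionally makes explicit the passage from the sequent $S_\gamma$ to the single formula $\mathcal{F}(S_\gamma)$ via $\neg_r^*$ and $c_r^*$, so that $ind$ (which has one principal formula) is applicable, and recovers $S_\gamma$ from $\vdash\mathcal{F}(S_\gamma)$ by cuts at the end; you gesture at this, and it is the same idea.

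The genuine divergence, and the gap, lies in your treatment of quantifiers. You propose to prenex $\mathcal{F}(S(\delta,\vec{n}_\delta,m_\delta))$, perform the induction on a quantifier-free matrix, and reintroduce the quantifiers below the $ind$ inference. This fails in general, and it fails on the paper's own motivating example. Take $S(\delta)\colon \forall x(P(x)\to P(s(x)))\vdash\forall x(P(x)\to P(f(x,n)))$ from Example~\ref{ex.proofschema}. The step case must instantiate the antecedent $\forall x(P(x)\to P(s(x)))$ at the term $f(y,m)$, which depends on the induction stage $m$; if that quantifier is pulled out of the induction formula, the matrix carries only a single free-variable instance of it, which cannot supply all the stage-dependent instances, and reintroducing it below the $ind$ inference (as an existential over the negated antecedent) would require a single witness term that does not exist. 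This is exactly the non-eliminability phenomenon of Example~\ref{ex.PRA-proof}. The paper instead keeps $\mathcal{F}(S_\gamma)$ quantified as the induction formula (only strong quantification of the induction parameter is excluded), so no prenexing is needed or wanted. What does need repair --- and what your proposal does not address --- is the following: in $\rho_1$, the branches above the self-referencing call $(\delta_\gamma,\emptyset)\colon S(\delta_\gamma)$ may contain strong quantifier inferences whose eigenvariable conditions are violated once that leaf is replaced by the formula axiom $\mathcal{F}(S(\delta_\gamma))\vdash\mathcal{F}(S(\delta_\gamma))$ and the derivation is reshaped into the premise $\mathcal{F}(S_\gamma)\{m_\gamma\leftarrow y\}\vdash\mathcal{F}(S_\gamma)\{m_\gamma\leftarrow y+1\}$. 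The paper's key observation is that, by the proof-call conditions, every quantifier introduced on such a branch is a cut-ancestor, so cut-elimination on the (induction-free) \textbf{LK} derivation $\rho_1$ removes these strong quantifier inferences before the replacement is made. You would need this step, or an equivalent of it; the prenexing manoeuvre does not substitute for it.
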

\begin{proof}
  Let $\mathcal{D} = \{ ( \delta_i, \rho(\delta_i, \vec{n}_i,0), \rho(\delta_i, \vec{n}_i , m_i +1)  ) \mid i \in \{ 1, ..., \alpha \} \}$ with $S(\delta_i) = S_i$ and if $i < j$ then $\delta_i > \delta_j$. Hence, $S(\delta_1) = S$.
  
  We construct inductively \textbf{PRA} proofs of $\mathcal{F}(S_\gamma)$, starting with $\gamma = \alpha$. 
  Assume we constructed \textbf{PRA} proofs $\xi_{\gamma +1},..., \xi_\alpha$ of $\mathcal{F(}S_{\gamma +1}), ... , \mathcal{F}(S_\alpha)$ respectively. Our aim is to construct a \textbf{PRA} proof of $\mathcal{F}(S_\gamma)$.

  In $\rho(\delta_\gamma, \vec{n}_\gamma, 0)$ replace any proof call of the form $(\delta_j,\Psi): S(\delta_j)\Psi$ by $\xi_j \Psi$ to obtain proof $\xi_\gamma^B$. Applications of $\neg_r^*$ will then yield a proof of $\vdash \mathcal{F}(S_\gamma) \{ m_\gamma \leftarrow 0 \}$.


In $\rho(\delta_\gamma, \vec{n}_\gamma, m_{\gamma +1})$ consider any branches that lead into a self-referencing proof call of the form $(\delta_\gamma,\emptyset): S(\delta_\gamma)$. Note that any introduction of a quantifier in these branches is cut and will not be part of the end-sequent of this derivation. Otherwise, the proof call conditions are violated.
Further note that $\rho(\delta_\gamma, \vec{n}_\gamma, m_{\gamma +1})$ is an \textbf{LK} derivation and therefore admits cut-elimination. Using this, we eliminate the introduction of all strong quantifiers in these branches and replace any self-referencing proof call of the form $(\delta_\gamma,\emptyset): S(\delta_\gamma)$ by axiom $\mathcal{F}(S(\delta_\gamma)) \vdash \mathcal{F}(S(\delta_\gamma))$. 

For all other branches, replace any proof call of the form $(\delta_j,\Psi): S(\delta_j)\Psi$ with $j \neq \gamma$ by $\xi_j \Psi$ to obtain proof $\xi_\gamma^S$. Applications of $\neg_r$ and $c_r$ will then yield a proof of $\mathcal{F}(S_\gamma)\{ m_\gamma \leftarrow y \} \vdash \mathcal{F}(S_\gamma)\{ m_\gamma \leftarrow y +1 \}$.

The desired proof $\xi_\gamma$ is then constructed as follows:

\begin{prooftree}
    \def\fCenter{\mbox{\ $\Rightarrow$\ }}
        \AxiomC{$ \xi_\gamma^B$}
        \RightLabel{$\neg_r^*$}
        \UnaryInfC{$\vdash \mathcal{F}(S_\gamma)\{ m_\gamma \leftarrow 0 \}$}
        \AxiomC{$ \xi_\gamma^S$}
        \RightLabel{$\neg_r^* , c_r^*$}
        \UnaryInfC{$\mathcal{F}(S_\gamma)\{ m_\gamma \leftarrow y \} \vdash \mathcal{F}(S_\gamma)\{ m_\gamma \leftarrow y +1 \}$}
        \RightLabel{$ind$}
        \BinaryInfC{$\vdash \mathcal{F}(S_\gamma)$}
\end{prooftree}
Note that in case of a proof call which includes an instantiation, we use $cut$ instead of $ind$.
Finally, we use cuts to derive $S_\gamma$ from the proof of $\mathcal{F}(S_\gamma)$.

For proof tuples without internal recursion, it suffices to replace any non-self-referencing proof call with its respective \textbf{PRA} derivation; an application of $ind$ is not necessary.

\end{proof}
The introduction of quantifiers does not affect our translation of \textbf{PRA} proofs into proof schemata, as demonstrated in a previous paper. Therefore, we will simply present the proof from \cite{LPAR2024C} here:
\begin{lemma} \label{lemma:pratoschema}
  Let $\pi$ be a \textbf{PRA} proof of $S$. Then there exists a proof schema with end-sequent $S$.

\end{lemma}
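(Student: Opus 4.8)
The plan is to proceed by induction on the structure of the \textbf{PRA} proof $\pi$, exploiting the fact that the \textbf{PRA} induction rule corresponds exactly to the recursion mechanism of a proof schema: the two premises of an induction inference become the base case $\rho_0$ and the step case $\rho_1$ of a fresh recursive proof symbol, where the step case is glued together by a self-referencing proof call. The base case of the structural induction is immediate: a single axiom $A \vdash A$ is already a trivial, non-recursive proof schema over $\Acal_s$ consisting of one proof symbol.

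For the inductive step I would distinguish according to the lowest inference of $\pi$. If it is an ordinary \textbf{LK} rule or an $\Ecal$ rule, the induction hypothesis supplies proof schemata $\Dcal_1$ (and $\Dcal_2$ in the binary case) for the immediate subproofs, with main symbols $\delta_1$ (and $\delta_2$); I would then introduce a fresh non-recursive main symbol $\delta_0$, greater than all symbols already present, whose base-case derivation $\rho_0(\delta_0)$ consists of the proof calls $(\delta_1,\Psi_1)\colon S(\delta_1)$ (and $(\delta_2,\Psi_2)\colon S(\delta_2)$) followed by the single inference in question, the $\Psi_i$ being identity parameter replacements and the passive parameters of $\delta_0$ being the induction variables occurring free in the conclusion. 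Since the quantifier rules $\forall_l,\forall_r,\exists_l,\exists_r$ are ordinary \textbf{LK} rules, this case already accounts for quantifier introduction, which is exactly why the presence of quantifiers leaves the translation from \cite{LPAR2024C} unchanged. If the lowest inference is an induction with premises $\Gamma \vdash \Delta, F(0)$ and $\Gamma, F(y) \vdash \Delta, F(y+1)$ and conclusion $\Gamma \vdash \Delta, F(n)$, I would introduce a fresh proof symbol $\delta$ with active parameter $m_\delta$ (in the role of $n$) and passive parameters $\vec{n}_\delta$ given by the remaining free induction variables of the conclusion, and set its end-sequent $S(\delta,\vec{n}_\delta,m_\delta)\colon \Gamma \vdash \Delta, F(m_\delta)$. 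Its base case $\rho_0(\delta,\vec{n}_\delta)$ is the schema translating $\Gamma \vdash \Delta, F(0)$; its step case $\rho_1(\delta,\vec{n}_\delta,m_\delta)$ takes the self-referencing call $(\delta,\emptyset)\colon S(\delta,\vec{n}_\delta,m_\delta)$ supplying $\Gamma \vdash \Delta, F(m_\delta)$, cuts it on $F(m_\delta)$ against the translation of the step premise $\Gamma, F(m_\delta) \vdash \Delta, F(m_\delta+1)$, and contracts to obtain $\Gamma \vdash \Delta, F(m_\delta+1)$. The proof symbols inherited from the two premise schemata become the strictly smaller symbols, collected as labelled proof calls in $\Ccal(\delta)$.

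The main obstacle I anticipate is the parameter bookkeeping required to meet Definition \ref{def.proofschema}, in particular condition 3: every proof call in $\Ccal(\delta)$ must carry a parameter replacement $\Psi$ on $(\vec{n}_{\delta'},m_{\delta'})$ with respect to $(\vec{n}_\delta,m_\delta)$. The \textbf{PRA} eigenvariable condition---that $y$ does not occur in $\Gamma,\Delta,F(0)$---is precisely what is needed for $m_\delta$ to qualify as an active parameter absent from the passive context, so that the self-call and the step derivation compose soundly. Nested inductions, in which an inner induction variable surfaces as a passive parameter of an outer context, force one to thread these replacements consistently along the ordering of proof symbols; verifying this is routine but tedious. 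Once the whole of $\pi$ has been processed in this way, the main symbol of the resulting schema has end-sequent $S$, and since all leaves are either axioms in $\Acal_s$ or labelled proof calls to strictly smaller symbols, the construction is a genuine proof schema, as required.
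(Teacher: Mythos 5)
Your proposal is correct and follows essentially the same route as the paper's proof: each \textbf{PRA} induction inference becomes a recursive proof-schema tuple whose base case translates the left premise and whose step case cuts a self-referencing proof call against the translated step premise and then contracts, with quantifier rules passing through unchanged as ordinary \textbf{LK} inferences. The only difference is one of granularity — you introduce a fresh (non-recursive) proof symbol for every lowest inference via structural induction, whereas the paper enumerates the induction inferences from topmost to bottommost and keeps each induction-free segment as a single derivation in which inner inductions are replaced by proof calls — which is presentational rather than substantive.
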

\begin{proof}

Let $\pi$ contain $\alpha$ induction inferences

\begin{prooftree}
\def\fCenter{\mbox{\ $\Rightarrow$\ }}

\AxiomC{$ \Gamma_\beta \vdash \Delta_\beta , F_\beta(0) $}
\AxiomC{$\Gamma_\beta, F_\beta(y) \vdash \Delta_\beta, F_\beta(y+1)$}

\RightLabel{$ind$}
\BinaryInfC{$ \Gamma_\beta \vdash \Delta_\beta, F_\beta(n_\beta) $}

\end{prooftree}
where $a \leq \beta \leq \alpha$. W.l.o.g. assume that if $\gamma < \beta$ then the induction inference with conclusion $\Gamma_\beta \vdash \Delta_\beta , F_\beta (n_\beta)$ is above the induction inference with conclusion $\Gamma_\gamma \vdash \Delta_\gamma , F_\gamma (n_\gamma)$.
We define $\vec{n} = \{ n_i \mid i \in \{ 1 ... \alpha \}\} \cup $ $V(\pi)$ as the set of all induction variables, where $n_i$ denotes the induction variable of the i-th induction inference, together with the set of free variables and constants $V$ in $\pi$.
Let $T$ be the transformation taking an \textbf{PRA} proof to a proof schema by replacing the induction inferences with conclusion 
$ \Gamma_\gamma \vdash \Delta_\gamma , F_\gamma (n_\gamma)$ by a proof call 
$ (\delta_\gamma, \{ m \leftarrow n_\gamma\}) : S(\delta_\gamma) \{ m \leftarrow n_\gamma\}$
with $S(\delta_\gamma) = \Gamma_\gamma \vdash \Delta_\gamma , F_\gamma (m)$.

We will inductively construct a proof schema $\mathcal{D} =\{   ( \delta_i, \rho (\delta_i, \vec{n}, 0), \rho(\delta_i, \vec{n}, m +1)) \mid i \in \{ 1 ... \alpha \}          \} $
with end-sequent $S(\delta_{i}) =  \Gamma_{i} \vdash \Delta_{i}, F_{i}(m +1) $ for each tuple and $\delta_i > \delta_{i+1}$. Assume we already constructed proof schema  
$\mathcal{D}_{\beta + 1} = \{   ( \delta_i, \rho (\delta_i, \vec{n}, 0), \rho(\delta_i, \vec{n}, m +1)) \mid i \in \{ (\beta+1) ... \alpha \}          \}$.

Consider the induction inference with conclusion $ \Gamma_\beta \vdash \Delta_\beta, F_\beta(n_\beta)$. Let $\varphi_1$ be the derivation above the left premise and $\varphi_2$ be the derivation above the right premise.
We construct a proof schema $\mathcal{D}_\beta = \{   ( \delta_\beta, \rho (\delta_\beta, \vec{n}, 0), \rho(\delta_\beta, \vec{n}, m +1))\} \cup \mathcal{D}_{\beta+1} $ with $\rho (\delta_\beta, \vec{n}, 0) = T(\varphi_1)$ and 
$\rho(\delta_\beta, \vec{n}, m +1) = $

\begin{prooftree}
\def\fCenter{\mbox{\ $\Rightarrow$\ }}
  \AxiomC{$(\delta_\beta, \emptyset ): S(\delta_\beta)$}
\UnaryInfC{$ \Gamma_\beta \vdash \Delta_\beta, F_\beta(m) $}

  \AxiomC {$T(\varphi_2)$ }
\UnaryInfC{$\Gamma_\beta,  F_\beta(m) \vdash \Delta_\beta, F_\beta(m +1) $}

\RightLabel{$cut, c_l^*, c_r^*$}
\BinaryInfC{$\Gamma_\beta \vdash \Delta_\beta, F_\beta(m +1) $}
\end{prooftree}
Summarising, $( \delta_\beta, \rho (\delta_\beta, \vec{n}, 0), \rho(\delta_\beta, \vec{n}, m +1))$
is a proof schema tuple with end-sequent $\Gamma_\beta \vdash \Delta_\beta, F_\beta(n) $, as desired.

Finally, the part of $\pi$ located beneath the last induction inference is translated into proof schema $\mathcal{D'} = \{(\delta', \rho (\delta', \vec{n}, m), \rho (\delta', \vec{n}, m+1 ) )\} \cup \mathcal{D} $ with $S(\delta') = S$ and $\delta' > \delta_i$ for $i \in \{1 ... \alpha \}$. Let $\varphi$ be the derivation above $S$. As there is no internal recursion in $\delta'$ needed, we only define $\rho (\delta', \vec{n}, m) = $ $\dfrac{T(\varphi)}{S}$.

\end{proof}
As proof schemata are intended to be evaluated for a specific parameter assignment, the above translation does not generally apply to the strong quantification of induction variables. Consider the following \textbf{PRA} derivation, and for simplicity, let $\varphi_1$ and $\varphi_2$ be induction-free:

\begin{prooftree}
          \AxiomC{$(\varphi_1)$}
        \UnaryInfC{$\Gamma \vdash \Delta, F(0)$}

        \AxiomC{$(\varphi_2)$}
        \UnaryInfC{$\Gamma, F(y) \vdash \Delta , F(y+1)$}

        \RightLabel{$ind$}
        \BinaryInfC{$\Gamma \vdash \Delta , F(n)$}
        \RightLabel{$\forall:r$}
        \UnaryInfC{$\Gamma \vdash \Delta , \forall z F(z)$}
\end{prooftree}
A translation, as described in Lemma \ref{lemma:pratoschema}, would initially yield a proof schema $\mathcal{D} = ( \delta, \rho (\delta, \vec{n}, 0), \rho(\delta, \vec{n}, \linebreak m +1))$ with end-sequent $S(\delta) = \Gamma \vdash \Delta, F(m)$ for the induction. In the subsequent step, a strong quantification of parameter $m$ is not feasible, as evaluating $\mathcal{D}$ would violate the eigenvariable condition in $\forall_r$.

However, there are instances where proof schemata can simulate strongly quantified induction variables in \textbf{PRA} proofs. If a quantified induction variable is subsequently cut in a proof, we can omit the application of the quantifier rule and shift the cut upwards in a Gentzen-style manner.
To achieve this, we locate all applications of a weak quantifier rule that lead to the cut formula in the right branch of the proof and instantiate the parameter of the proof schemata representing the induction with the respective terms found in this way. This process is illustrated in the following example.

\begin{example}
In this example we translate a \textbf{PRA} derivation with a strongly quantified induction variable into a proof schema. 

Consider the following \textbf{PRA} derivation $\pi$ with $\varphi_1$, $\varphi_2$, $\psi_1$ and $\psi_2$ induction free for simplicity. Note that the induction variable $n$ of the left induction rule is strongly quantified and subsequently cut. Let $\pi$ be:

\begin{scriptsize}
\begin{prooftree}
    \AxiomC{$(\varphi_1)$}
    \UnaryInfC{$\Gamma \vdash \Delta, F(0)$}

    \AxiomC{$(\varphi_2)$}
    \UnaryInfC{$\Gamma, F(y) \vdash \Delta, F(y+1)$}

    \RightLabel{$ind$}
    \BinaryInfC{$\Gamma \vdash \Delta, F(n)$}
    \RightLabel{$\forall_r$}
    \UnaryInfC{$\Gamma \vdash \Delta, \forall z F(z)$}


    \AxiomC{$(\psi_1)$}
    \UnaryInfC{$\Sigma, F(t_1) \vdash \Pi, G(0)$}
    \RightLabel{$\forall_l$}
    \UnaryInfC{$\Sigma, \forall z F(z) \vdash \Pi, G(0)$}

    \AxiomC{$(\psi_2)$}
    \UnaryInfC{$\Sigma, F(t_2), G(x) \vdash \Pi, G(x+1)$}
    \RightLabel{$\forall_l$}
    \UnaryInfC{$\Sigma, \forall z F(z), G(x) \vdash \Pi, G(x+1)$}

    \RightLabel{$ind$}
    \BinaryInfC{$\Sigma,\forall z F(z) \vdash \Pi, G(m)$}

    \RightLabel{$cut$}
    \BinaryInfC{$\Gamma, \Sigma \vdash \Delta, \Pi, G(m)$}

\end{prooftree}
\end{scriptsize}
In the right branch of $\pi$, there are two applications of $\forall_l$. Note the respective terms $t_1$ and $t_2$, as we will later use them to instantiate the proof schema representing the induction in the left branch.

We construct a proof schema $\mathcal{D} = \{  (\delta_1, \rho (\delta_i, \vec{n}, 0), \rho(\delta_i, \vec{n}, m_i +1)  ) \mid i \in \{  0,1,2\}  \}$.
For the left induction in $\pi$, we define $\rho(\delta_2, \vec{n}, 0) = \varphi_1$ and $\rho(\delta_2, \vec{n}, m_2 +1)$ as 

\begin{prooftree}
\def\fCenter{\mbox{\ $\Rightarrow$\ }}
    \AxiomC{$(\delta_2, \emptyset ): S(\delta_2)$}
    \UnaryInfC{$ \Gamma \vdash \Delta, F(m_2) $}

    \AxiomC {$(\varphi_2)$ }
    \UnaryInfC{$\Gamma,  F(m_2) \vdash \Delta, F(m_2 +1) $}

    \RightLabel{$cut, c_l, c_r$}
    \BinaryInfC{$\Gamma \vdash \Delta, F(m_2 +1) $}
\end{prooftree}
For the right induction in $\pi$, we omit the applications of $\forall_l$ and use $w_l$ instead. We define $\rho(\delta_1, \vec{n}, 0) $ as 
\begin{prooftree}
    \AxiomC{$(\psi_1)$}
    \UnaryInfC{$\Sigma, F(t_1) \vdash \Pi , G(0)$}
    \RightLabel{$w_l$}
    \UnaryInfC{$\Sigma, F(t_1), F(t_2) \vdash \Pi, G(0)$}
\end{prooftree}
and $\rho(\delta_1, \vec{n}, m_1 +1) $ as
\begin{prooftree}
    \AxiomC{$(\delta_1, \emptyset):S(\delta_1)$}
    \UnaryInfC{$\Sigma, F(t_1), F(t_2) \vdash \Pi, G(m_1)$}

    \AxiomC{$(\psi_2)$}
    \UnaryInfC{$\Sigma, F(t_2), G(m_1) \vdash \Pi, G(m_1 +1)$}
    \RightLabel{$w_l$}
    \UnaryInfC{$\Sigma, F(t_1), F(t_2), G(m_1) \vdash \Pi, G(m_1 +1)$}

    \RightLabel{$cut$, $c_l$, $c_r$}
    \BinaryInfC{$\Sigma, F(t_1), F(t_2) \vdash \Pi, G(m_1 +1)$}
\end{prooftree}
Lastly we define $\delta_0$. As previously mentioned, we instantiate the proof schema $\delta_2$ with the terms $t_1$ and $t_2$. This allows us to perform a $cut$ on an instantiated formula rather than a quantified one, and we can omit the application of $\forall_r$.
Since there is no internal recursion in $\delta_0$, we only define $\rho(\delta_0, \vec{n} ,m_0)$ as
\begin{small}
\begin{prooftree}
    \AxiomC{$(\delta_2, \{ m_2 \leftarrow t_1 \}): S(\delta_2)\{ m_2 \leftarrow t_1 \}$}
    \UnaryInfC{$\Gamma \vdash \Delta, F(t_1)$}

    \AxiomC{$(\delta_2, \{ m_2 \leftarrow t_2 \}): S(\delta_2)\{ m_2 \leftarrow t_2 \}$}
    \UnaryInfC{$\Gamma \vdash \Delta, F(t_2)$}

    \RightLabel{$\land_r$}
    \BinaryInfC{$\Gamma \vdash \Delta, F(t_1) \land F(t_2)$}
    

    \AxiomC{$(\delta_1, \emptyset): S(\delta_1)$}
    \UnaryInfC{$\Sigma, F(t_1), F(t_2) \vdash \Pi, G(m_1)$}
    \RightLabel{$\land :l$}
    \UnaryInfC{$\Sigma, F(t_1) \land F(t_2) \vdash \Pi, G(m_1)$}

    \RightLabel{$cut$}
    \BinaryInfC{$\Gamma, \Sigma \vdash \Delta, \Pi, G(m_1)$}
    
\end{prooftree}
\end{small}

\end{example}

\section{Conclusion}

It was shown in \cite{Thesis.Lolic.2020} that when a proof with induction is formulated as proof schema, a recursive structure that represents the proof's Herbrand sequent can be extracted. It however remained an open question to relate proof schemata to systems of arithmetic, i.e. to identify the class of proofs that can be represented as proof schemata and analyzed with the methods in \cite{Thesis.Lolic.2020}. A first investigation in this direction was presented in \cite{LPAR2024C}, where it was shown that proofs in quantifier-free primitive recursive arithmetic (quantifier-free PRA) can be represented as proof schemata, and that quantifier-free proof schemata can be translated back into quantifier-free PRA. In this work we generalize and extend this result by investigating also the cases for quantifiers in proofs. We demonstrate the translation in both directions, with the condition that the inductive parameter is not quantified. 

Together with a completeness result for the proof analysis method in \cite{Thesis.Lolic.2020} (a result not obtained so far), the result in this paper will yield a realization of Herbrand's theorem for an expressive fragment of formal number theory.

\bibliographystyle{eptcs}
\bibliography{refs.bib}
\end{document}